\documentclass[reprint,amsmath,amssymb,aps,pra]{revtex4-2}

\usepackage{graphicx}
\usepackage{dcolumn}
\usepackage{bm}
\usepackage{hyperref}
\usepackage{braket}
\usepackage{physics}
\usepackage{amsmath}
\usepackage{amssymb}
\usepackage{tcolorbox}
\usepackage{amsthm}
\usepackage[]{quantikz}
\usepackage[ruled,vlined,linesnumbered]{algorithm2e}
\usepackage{mathdots}

\newtheorem{claim}{Claim}

\newcommand{\smallket}[1]{{|{#1}\rangle}}

\begin{document}


\title{A Polylogarithmic-Depth Quantum Multiplier}
\author{Fred Sun${}^{1,2}$}
\author{Anton Borissov${}^{1}$}
\affiliation{${}^{1}$softwareQ inc.}
\affiliation{${}^{2}$University of Waterloo}

\newcommand{\anton}[1]{{\color{red}(anton): #1}}
\newcommand{\fred}[1]{{\color{blue}(fred): #1}}

\date{\today}
\begin{abstract}
We present a quantum algorithm for multiplying two $n$-bit integers with overall circuit depth and $T$-depth both bounded by $O(\log^{2} n)$, while using $O(n^{2})$ gates and ancillary qubits. Our construction generates partial products via indicator-controlled copying and adds them using a binary adder tree, enabling parallel accumulation with logarithmic depth overhead per level. To the best of our knowledge, our design has the lowest $T$-depth among all multiplication algorithms using the Clifford + $T$ model. By optimizing both circuit depth and $T$-depth, our construction advances the practical feasibility of large-scale fault-tolerant quantum algorithms.
\end{abstract}

\maketitle
\section{Introduction}

Quantum arithmetic is a fundamental building block in many quantum algorithms,
most notably the modular exponentiation component of Shor’s factoring algorithm \cite{Shor1997, pavlidis2013fastquantummodularexponentiation}, but also in scientific
computing applications and numerical approximations, such as Hamiltonian simulations, the HHL algorithm for solving linear systems, and Taylor series approximations \cite{Childs_2017, Harrow_2009, Berry_2015}.
Efficient implementations of elementary arithmetic operations, eg. addition and
multiplication, play a critical role in the optimization of large-scale quantum
circuits. 
\cite{wang2024comprehensivestudyquantumarithmetic,
cuccaro2004newquantumripplecarryaddition,
Draper2004, Ruiz_Perez_2017,
gidney2025classicalquantumadderconstantworkspace}.

In fault-tolerant quantum computing, the cost of an algorithm is more accurately
captured by circuit depth and, in particular, by $T$-depth, rather than
by total gate count alone \cite{Selinger2013,Amy2012}. Since non-Clifford gates
dominate the overhead of fault-tolerant implementations, optimizing $T$-depth is
often more desirable for practical performance, even at the expense of qubit
count or Clifford gate depth.

In this work, we study the multiplication of two $n$-bit natural numbers. We present a simple and modular quantum multiplication algorithm
that achieves $O(\log^2 n)$ depth and $T$-depth, making it the fastest known
multiplication circuit constructed entirely from standard Clifford+$T$
primitives. The algorithm prioritizes minimizing $T$-depth and overall depth, trading off a
quadratic number of clean ancillary qubits and T-gates to achieve these performance
improvements. 

\subsection{Related Work}

\begin{table*}[t]
    \centering
    \caption{Asymptotic resource costs for quantum multiplication algorithms.}
    \begin{tabular}{|c|c|c|c|}
    \hline
        Paper & $T$-depth & $T$-count & Ancillary Qubit Count\\\hline
        Shift-and-Add \cite{Munoz-Coreas2019} 
        & $O(n^2)$  
        & $O(n^2)$ 
        & $O(n)$\\\hline
        Karatsuba \cite{parent2017improvedreversiblequantumcircuits} 
        & $O(n^{1.158})$ 
        & $O(n^{\log_2 3})$ 
        & $O(n^{1.427})$\\\hline
        Toom--Cook \cite{Putranto2023} 
        & $O(n^{1.0569})$ 
        & $O(n^{\log_8 15})$ 
        & $O(n^{1.245})$\\\hline
        Schönhage--Strassen$^{*}$ \cite{Nie2023}
        & $O(\log^2 n)$  
        & $O(n\log n\log\log n)$ 
        & $O(n\log n\log\log n)$\\\hline
        \textbf{This work} 
        & $O(\log^2 n)$  
        & $O(n^2)$ 
        & $O(n^2)$\\\hline
    \end{tabular}\\
    \footnotesize
    $^{*}$The quantum Schönhage--Strassen algorithm is a galactic algorithm and requires the input size to be $n \geq 2^{40}$ \cite{Nie2023}.
    
    \label{table:compare_asymp}
\end{table*}

\begin{table*}[t]
    \centering
    \caption{Concrete (non-asymptotic, tight upper-bounded) resource costs for quantum multiplication algorithms. `--' indicates no concrete metric was provided.}
    \resizebox{\textwidth}{!}{
        \renewcommand{\arraystretch}{1.2}

    \begin{tabular}{|c|c|c|c|c|c|}
    
    \hline
        Paper 
        & Depth 
        & Toffoli Depth 
        & Toffoli Count 
        & Gate Count 
        & Ancillary Qubits
        \\
    \hline
        Shift-and-Add \cite{Munoz-Coreas2019} 
        & $5n^2+2n-6$ 
        & $3n^2-14$ 
        & $3n^2-14$ 
        & $7n^2-4n+4$ 
        & $2n+1$
        \\
    \hline
        Karatsuba \cite{parent2017improvedreversiblequantumcircuits} 
        & -- 
        & $\approx n^{1.158}$ 
        & $98n^{\log 3}$ 
        & $218n^{\log{3}}$ 
        & $\approx n^{1.427}$
         \\
    \hline
        Toom--Cook \cite{Putranto2023} 
        & -- 
        & $\approx n^{1.0569}$ 
        & $112n^{\log_8 15}-128n$ 
        & -- 
        & $\approx n^{1.245}$
         \\
    \hline
        Schönhage--Strassen (weaker) \cite{Nie2023} 
        & $58\log^2 n -126\log n+806$ 
        & $58\log^2{n}+O(\log{n})   $  
        & --  
        & -- 
        & $13n^{1.5}+4\log n-8$
         \\
    \hline
        \textbf{This work} 
        & $3\log^2n + 17\log{n} + 20$ 
        & $3\log^2n+7\log{n}+14$ 
        & $12n^2-n\log{n}$ 
        & $26n^2+2n\log{n}$ 
        & $3n^2$
        \\  
    \hline
    \end{tabular}}
    \label{table:compare}
\end{table*}

Many quantum circuits for integer multiplication have been proposed, each with
trade-offs among depth, $T$-count, and ancillary qubit count. In this work, we
compare against representative constructions, as surveyed by Wang et al. \cite{wang2024comprehensivestudyquantumarithmetic}, that serve as benchmarks in
fault-tolerant settings: shift-and-add (schoolbook) multiplication, Karatsuba's
algorithm, Toom–Cook multiplication, and divide-and-conquer approaches such as
the quantum Schönhage–Strassen algorithm
\cite{Munoz-Coreas2019, parent2017improvedreversiblequantumcircuits,
Putranto2023, Nie2023}. The shift-and-add multiplication method, implemented by Muñoz-Coreas \emph{et
al.}, uses Toffoli gates and controlled adders to directly construct the
standard schoolbook multiplication procedure \cite{Munoz-Coreas2019}. This
approach has quadratic $T$-depth and $T$-count, but requires only $O(n)$
ancillary qubits.

The quantum implementations of Karatsuba's algorithm and the Toom--Cook multiplication algorithm reduce asymptotic gate complexity by recursively decomposing large multiplications into smaller subproblems \cite{parent2017improvedreversiblequantumcircuits, Putranto2023}. These approaches improve $T$-count and $T$-depth compared to shift-and-add multiplication at the tradeoff of a greater number of ancillary qubits.

Nie et al. \cite{Nie2023} introduce the quantum Schönhage--Strassen
algorithm, which achieves $O(\log^2 n)$ depth and $O(n\log n\log\log n)$ $T$-count by
employing a recursive structure based on the Fast Fourier Transform. This
algorithm also requires only $O(n\log n\log\log n)$ ancillary qubits. However,
the asymptotic expressions hide extremely large constant factors, making the
approach practical only for very large input sizes ($n \ge 2^{40}$
\cite{Nie2023}). 
This paper also designed a simplified version of the
Schönhage--Strassen algorithm with $O(n^{1.5})$ gate and qubit counts, which has
more manageable coefficients and could be implemented for moderately large
inputs. 

A tree-like solution can be applied to multiplication, using a similar method to their Quantum Carry Lookahead Adder \cite{Draper2004}. It was later proposed that a family of circuits can optimize multiplication down to depth and Toffoli depth $8\log^2{n}+O(\log{n})$ based on the standard textbook multiplication algorithm, such that it conducts addition in the fashion of the binary tree \cite{Nie2023}. However, such a design has both size and number of ancillary qubits $O(n^2)$ and no concrete implementation has yet been provided.

Our algorithm follows this approach to achieve both a depth and Toffoli depth of $3\log^2{n}+O(\log{n})$ through an optimized shift-and-add-based divide-and-conquer structure. In this paper, we also introduce explicit circuit constructions with detailed proofs of correctness and provide concrete (non-asymptotic, tight upper-bounded) resource-cost analyses across all submodules. Furthermore, our approach is optimized to have low constant factors in the resource costs, making it a practical solution to quantum multiplication for when circuit depth is the most important criterion. A summary of our algorithm's resource costs is shown in
Table~\ref{table:compare_asymp} and Table~\ref{table:compare}.

The remainder of the paper is organized as follows. In Section~2, we provide the foundations for our algorithm, including the design and analysis of all submodules used. In Section~3, we explicitly provide the full fast quantum multiplication algorithm and explicitly analyze our algorithm's depth, Toffoli depth, size, and ancilla requirements.

\section{Preliminaries} \label{sec:prelim}

\subsection{Notation}\label{subsec:math}
Consider two $n$-bit positive integers, $x,y$. We express their binary representation as
\begin{align}
    x &= x_{n-1}\cdot2^{n-1}+\dots+x_1\cdot2+x_0, \label{eq:DefnX}\\
    y &= y_{n-1}\cdot2^{n-1}+\dots+y_1\cdot2+y_0,
\end{align}
where $n = \lfloor\log_2{\max{(x,y)}}\rfloor+1$, and for $0\le i \le n-1$, $x_i,y_i \in \{0,1\}$. The most significant bit of $x$ is $x_{n-1}$ and the least significant bit is $x_0$.

We structure our algorithm around the following representation of the product:
\begin{align}
    xy  &= y_{n-1}(2^{n-1}x)+\dots+y_1(2x)+y_0x. \label{Eq1}
\end{align}
This allows us to think of each $y_i$ as a binary indicator variable for whether
$2^ix$ is included in the sum to compute the product. 

We find it useful to introduce $n$-bit \emph{partial products}, $\alpha^{(0,i)}
:= y_ix$, and we introduce \emph{partial sums} defined as sums of consecutive
partial products weighted with factors of $2$:
\begin{align}
\alpha^{(d,r)} 
= \sum_{k=0}^{2^d-1} 2^k \alpha^{(0,r2^d + k)}
= \left(\sum_{k=0}^{2^d-1} 2^k y_{r2^d + k}\right) x.
\label{eq:DefnAlpha}
\end{align}
Since it is possible to express $\alpha^{(d,r)}$ as a product of a $2^d$ bit
integer and a $n$ bit integer, for $1 \le d \le \lceil \log{n}\rceil$, each
$\alpha^{(d,r)}$ is a $n+2^d$ bit nonnegative integer.
Note that the final partial sum, $\alpha^{(\lceil \log n \rceil, 0)}$,
is the desired product, $xy$.

Throughout the paper, if we have an $m$-bit nonnegative integer, $a = a_{m-1}2^{m-1}+\cdots+a_0$, we shall
encode it into an $m$-qubit register as,
\begin{align}
\ket{a} \equiv 
\ket{a_{m-1}} \cdots \ket{a_{0}}. \label{eq:MbitQubit}
\end{align}

We would like to find an efficient circuit that implements the following transformation,
\begin{align}
    \ket{x}\ket{y}\ket{0}^{\otimes 2n}
    \ket{0}_{anc} \xrightarrow[\qquad]{} \ket{x}\ket{y}\ket{xy}\ket{0}_{anc},
    \label{eq:Desired}
\end{align}
where $\ket{x}$, $\ket{y}$, $\ket{xy}$ represent $n$, and $2n$-qubit
registers, defined using 
Eq.~\eqref{eq:MbitQubit}.
$\ket{0}_{anc}$ represents a helper ancilla register of size that will be determined later.

To implement the above algorithm, it is useful to 
define $n$-qubit registers, $\ket{x^{(i)}}$ and $\ket{y^{(i)}}$, for $i=0,\ldots,n-1$.
We shall also use registers $\ket{\alpha^{(d,r)}}$ which have $n+2^d$-qubits for $1 \le d \le \lceil\log n\rceil$.
We also abuse notation by writing $\ket{0^{(i)}}$ to denote either $\ket{x^{(i)}}$ or $\ket{y^{(i)}}$ that is initialized to $\ket{0}^{\otimes n}$.



For the remainder of this section, we describe the three submodules needed to efficiently compute all partial sums, $\ket{\alpha^{(d,r)}}$. 
The final partial sum, with $d=\lceil\log n\rceil$ and $r = 0$, is equal to $\ket{xy}$.

\subsection{Fast Copying}\label{subsec:fast_copy}

The \texttt{fast\_copy} algorithm, shown in Algorithm~\ref{algo:fastcopy},
produces $n$ computational basis copies of an $n$-qubit quantum register in
$O(\log n)$ depth, using $O(n^2)$ ancillary qubits. In our multiplication algorithm, \texttt{fast\_copy} is used to create $n$ copies of both $\ket{x}$ and $\ket{y}$ to maximize the amount of parallel operations that can be performed, thereby optimizing the circuit depth.

\begin{algorithm*}[H]
\label{algo:fastcopy}
\caption{fast\_copy}
\KwIn{
$
\smallket{x^{(0)}}
\smallket{0^{(1)}}
\smallket{0^{(2)}}\cdots 
\smallket{0^{(n-1)}}$\vspace{0.15cm}
}

filled $\gets \{ \, \, \smallket{x^{(0)}} \, \, \}$\;

\For{$t \gets 0$ \KwTo $\lceil{\log{n}}\rceil$}{
    \ForEach{$\smallket{x^{(i)}} \in$ filled}{
        \mbox{copy $\smallket{x^{(i)}}$ onto $\smallket{0^{(i+2^t)}}$ using $n$ parallel CNOTs\;}\\
        filled.append\_back(\,\,$\smallket{x^{(i+2^t)}}$\,\,)\;
    }
}
\vspace{0.15cm}
\KwOut{
$
\smallket{x^{(0)}} 
\smallket{x^{(1)}}
\smallket{x^{(2)}}\cdots
\smallket{x^{(n-1)}}
$\\
}
\end{algorithm*}

A CNOT gate from arbitrary basis state $\ket{a}, a\in \{0,1\}$ targeting $\ket{0}$ copies the computational basis state of $a$. We can copy the entire register $\ket{x^{(i)}}$ onto $\ket{0^{(i)}}$ in depth of one by using $n$ CNOTs; one from each qubit of $\ket{x^{(i)}}$ onto the corresponding qubit of $\ket{0^{(i)}}$.

\begin{claim}
    Algorithm~\ref{algo:fastcopy} produces $n$ copies of an $n$-qubit register in $O(\log{n})$ depth.
\end{claim}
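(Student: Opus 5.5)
We prove the claim by induction on the round index $t$, showing two things: each round has depth one, and the number of filled registers doubles in every round. Write $F_t$ for the set of indices $i$ such that $\smallket{x^{(i)}}$ holds a copy of $x$ at the start of round $t$. The invariant is $F_t=\{0,1,\dots,2^t-1\}$, intersected with $\{0,\dots,n-1\}$.

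The base case is immediate, since $F_0=\{0\}$. For the inductive step, round $t$ copies each $\smallket{x^{(i)}}$ with $i\in F_t$ onto $\smallket{0^{(i+2^t)}}$. These targets are exactly the indices $2^t,\dots,2^{t+1}-1$. Targets with index at least $n$ are skipped, which is the implicit convention of Algorithm~\ref{algo:fastcopy}. So $F_{t+1}=\{0,\dots,2^{t+1}-1\}\cap\{0,\dots,n-1\}$.

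Correctness of each individual copy follows from the observation preceding the claim. A CNOT from a basis state $\ket{a}$ onto $\ket{0}$ yields $\ket{a}\ket{a}$, so $n$ such CNOTs copy the register qubitwise. This extends to superpositions over $x$ by linearity, giving $\sum_x c_x\ket{x}\ket{x}\cdots$, which is all the claim requires.

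The one point needing care, and the main obstacle, is that all copies within a single round can run in parallel, so that each round costs depth one rather than depth $|F_t|$. To see this, note that the sources $F_t$ and the targets $F_t+2^t$ are disjoint sets of registers. Moreover, the map $i\mapsto i+2^t$ is injective. Hence the $n|F_t|$ CNOTs in round $t$ act on pairwise disjoint qubit pairs and form a single layer. One subtlety is that the loop in Algorithm~\ref{algo:fastcopy} appends to \texttt{filled} while iterating over it. We interpret the round as using only the snapshot $F_t$ taken at its start, since newly filled registers begin serving as sources only in round $t+1$.

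It remains to count rounds and qubits. Rounds $t=0,\dots,\lceil\log n\rceil-1$ already give $2^{\lceil\log n\rceil}\ge n$ filled registers. The extra final iteration in the stated loop range is empty or harmless. The total depth is therefore at most $\lceil\log n\rceil+1=O(\log n)$. The construction uses $n(n-1)$ ancillary qubits and fewer than $n^2$ CNOTs in total.
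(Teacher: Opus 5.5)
Your proof is correct and follows the same doubling argument as the paper: each round copies every filled register onto a fresh one in a single layer of parallel CNOTs, so $\lceil\log n\rceil$ rounds suffice. Your explicit invariant $F_t=\{0,\dots,2^t-1\}\cap\{0,\dots,n-1\}$, your disjointness check for the depth-one rounds, and your snapshot reading of the loop make rigorous what the paper's proof states only informally.
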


\begin{proof}
    We start with $n$-qubit register $\ket{x^{(0)}}$ and $n-1$ $n$-qubit registers $\ket{0^{(1)}}\dots\ket{0^{(n-1)}}$. As discussed, we can copy $\ket{x^{(0)}}$ in a depth of one. We start at timestep $t=0$. For $t=1$, we copy $\ket{x^{(0)}}$ onto $\ket{0^{(1)}}$, producing $\ket{x^{(1)}}$. Newly created registers are added to the set of populated registers and may themselves be used as sources in subsequent iterations. At each timestep $t$, all previously populated registers, $\ket{x^{(0)}}\dots\ket{x^{(2^{t-1}-1)}}$, are used as sources to copy onto a fresh register. At iteration $t$, the number of populated registers doubles. Since the algorithm starts with a single populated register and aims to generate $n$ total registers, $\lceil \log n \rceil$ iterations suffice to populate all target registers. Each iteration applies a register copy in parallel across all currently populated registers, yielding an overall circuit depth of $O(\log n)$.
\end{proof}

For an 8-register case, the \texttt{fast\_copy} algorithm is shown in Fig.~\ref{fig:fast_copy}. 

\begin{figure}[h]
    \centering
    \includegraphics[width=\columnwidth]{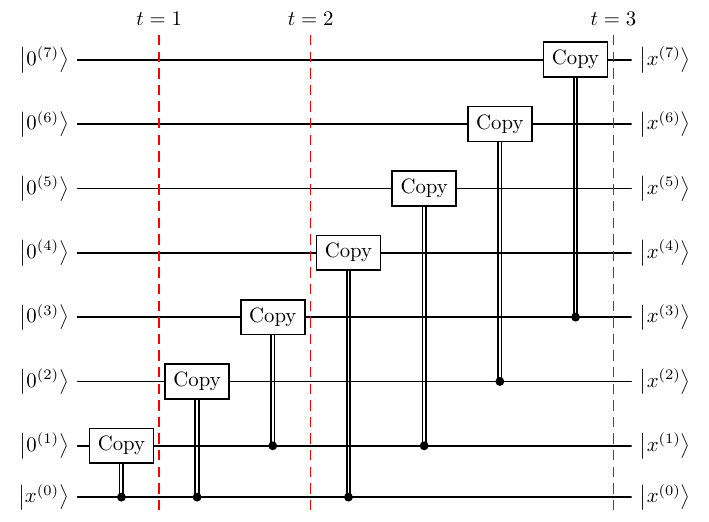}
    
    \caption{Implementation for fast-copying of 8 registers. The controlled copy gates indicate using $n$ CNOTs in parallel to copy the basis state of the entire register. All operations in each timestep $t$ are performed in parallel.}
    \label{fig:fast_copy} 
\end{figure}

For simplicity in the remainder of the algorithm, after copying $\ket{y}$ $n$ times, we group the $n$ copies of each of the bits of $\ket{y}$ together. That is, we say $\ket{y^{(i)}} \equiv \ket{y_i}^{\otimes n}$.

\subsection{Partial Products}\label{subsec:partial_prod}

In this subsection, we introduce the \texttt{conditional\_copy} and \texttt{partial\_products} subroutines,
which compute all partial products, $\ket{\alpha^{(0,i)}}  = \ket{y_ix}$.

The \texttt{conditional\_copy} subroutine applies $n$ Toffoli gates to implement:
\begin{align}
\ket{y^{(i)}}\ket{x^{(i)}}\ket{0^{(i)}}
&\xrightarrow[\qquad]{} \ket{y^{(i)}}\ket{x^{(i)}}\ket{\alpha^{(0,i)}}.
\end{align}
Note that all Toffoli gates operate on disjoint sets of qubits, so all
operations in \texttt{conditional\_copy} can be performed in parallel. 
Thus, \texttt{conditional\_copy} only contributes a Toffoli-depth of $1$, despite having $n$ Toffoli gates.

The subroutine \texttt{partial\_products} performs \texttt{conditional\_copy} on
all pairs $\ket{y^{(i)}}\ket{x^{(i)}}$ in parallel to produce all partial
products of $\ket{x}$ and $\ket{y}$:
\begin{align}
    \ket{\alpha^{(0,0)}},\ket{\alpha^{(0,1)}},\dots,\ket{\alpha^{(0,n-1)}}
\end{align}


 \begin{algorithm}[h]
 \caption{partial\_products}
 \label{algo:itwopowermultiples}
 \KwIn{
 $\smallket{y^{(0)}}
\!\cdot\!\cdot\!\cdot\!
 \smallket{y^{(n-1)}}
 \smallket{x^{(0)}}
\!\cdot\!\cdot\!\cdot\!
 \smallket{x^{(n-1)}}\smallket{0^{(0)}}
\!\cdot\!\cdot\!\cdot\!
 \smallket{0^{(n-1)}}$
 }
 \For{$i \gets 0$ \KwTo $n-1$}{
    \texttt{conditional\_copy($i$)}\;
 }
 \end{algorithm}

Note that all Toffoli gates can be performed in parallel since all of the gates act on disjoint pairs, and so 
\texttt{partial\_products} contributes a Toffoli-depth of $1$.

\subsection{Parallel Addition of $n$ Terms}\label{subsec:add}

In this subsection, we introduce the \texttt{parallel\_adder\_tree}, which
computes the sum of all partial products weighted with powers of two, which is equivalent to the product
$\ket{xy} = \ket{\alpha^{(0,0)} + 2\alpha^{(0,1)}+\dots + 2^{n-1}\alpha^{(0,n-1)}}$. 
Each addition will be performed using a modified version of Draper et al.'s Quantum Carry Lookahead Adder (QCLA) \cite{Draper2004}.
The details of the adder, cf. Fig.~\ref{fig:adder}, will be discussed in Sec.~\ref{subsec:modified_qcla}.

\begin{figure}[h]
    \centering
    \resizebox{0.6\columnwidth}{!}{
    \begin{quantikz}[column sep=0.4cm]
        \lstick{$\ket{\alpha^{(d-1,2r)}}$}   & \gate[3]{\texttt{\;\;Add}_{d}}\gateinput{0} & \rstick{$\ket{\alpha^{(d-1,2r)}}$} \qw \\
        \lstick{$\ket{\alpha^{(d-1,2r+1)}}$} & \gateinput{1}                                 & \rstick{$\ket{\alpha^{(d-1,2r+1)}}$} \qw \\
        \lstick{$\ket{0}^{\otimes n+2^d}$}  &\gateinput{2}                                            &\rstick{$\ket{\alpha^{(d,r)}}$}
    \end{quantikz}
    }
    \caption{High-level circuit diagram for a quantum adder at depth $d$. The $0$ and $1$ inside the adder block represent the two input states, and $2$ represents the output register.}
    \label{fig:adder}
\end{figure}
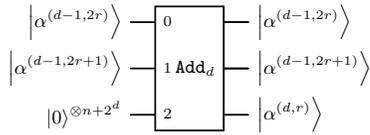

By organizing the partial sums in a tree-like fashion, we may reduce the circuit
depth of the computation of $\ket{xy}$ down to $\lceil \log n\rceil$ 
adders.
%
%
The nodes of the tree are the partial sums,
$\ket{\alpha^{(d,r)}}$,
the partial products, $\ket{\alpha^{(0,r)}}$, are leaves of
the tree, and the children of 
$\ket{\alpha^{(d+1,r)}}$ 
are
$\ket{\alpha^{(d,2r)}}$  
and
$\ket{\alpha^{(d,2r+1)}}$.
It is useful to note that for a node, $\ket{\alpha^{(d,r)}}$, $d$ represents the depth of each node in the tree and $r$ represents the index of each node along level $d$.  

We require $\lceil \log n \rceil$ different modified QCLA adders.
For $d \in [1,\lceil\log n\rceil]$,
we denote by $\texttt{Add}_{d}$,
the adder that takes the children of a node at tree-depth $d$,
and computes the corresponding partial sum using the identity
(cf. Eq.~\eqref{eq:DefnAlpha}),
\begin{align}
    {\alpha^{(d,r)}} = {\alpha^{(d-1,2r)} + 2^{2^{(d-1)}}\alpha^{(d-1,2r+1)}}.
    \label{eq9}
\end{align}

With this structure in mind, 
we organize the \texttt{parallel\_adder\_tree} subroutine 
by starting from the leaves,
$\ket{\alpha^{(0,0)}},\dots,\ket{\alpha^{(0,n-1)}}$,
and successively generating parent partial sums.

For example, to compute the partial sums at level ${d=1}$, we perform:
\begin{align}
    \smallket{\alpha^{(1,0)}} &= \smallket{\alpha^{(0,0)}\!+\!2\alpha^{(0,1)}}\!=\!\texttt{Add}_1(\smallket{\alpha^{(0,0)}},\smallket{\alpha^{(0,1)}}),\\
    \smallket{\alpha^{(1,1)}} &= \smallket{\alpha^{(0,2)}\!+\!2\alpha^{(0,3)}}\!=\!\texttt{Add}_1(\smallket{\alpha^{(0,2)}},\smallket{\alpha^{(0,3)}}),\\
    &\vdots \notag\\
    \smallket{\alpha^{1,\lceil \frac{n}{2}\rceil}} 
    &= 
    \begin{cases}
    \texttt{Add}_1(\smallket{\alpha^{(0,n-2)}},\smallket{\alpha^{(0,n-1)}}) & n \, \text{is even},\\
    \smallket{\alpha^{(0,n-1)}} & n \, \text{is odd}.
    \end{cases}
\end{align}


Each level reduces the number of partial sums by half. 
We continue this process $\lceil \log{n}\rceil$ times until we obtain the root of the tree:
$\ket{\alpha^{(\lceil\log{n}\rceil,0)}}=\ket{xy}$.

The remainder of this section will be focused on introducing our modified QCLA adder, how to efficiently uncompute all ancilla registers, and the design and analysis of a qubit-efficient implementation of the \texttt{parallel\_adder\_tree}.

\subsubsection{Modified QCLA}\label{subsec:modified_qcla}

Since the representation of $\ket{xy}$ has $2n$ qubits, a naive implementation
of each adder in the \texttt{parallel\_adder\_tree} can be represented with
every partial sum using $2n$ qubits, introducing leading and trailing zeros to
account for shifts by powers of $2$. Our modified QCLA avoids trivial sums with these leading/trailing zeros.


Let us now discuss how to compute the right-hand side of Eq.~\eqref{eq9}.
Recall that multiplying an integer by a power of $2$, say $2^k$, is equivalent
to a bit-shift to the left by $k$ bits. This means the least significant
$2^{d-1}$ qubits of $\ket{\alpha^{(d-1,2r)}}$ and the most significant $2^{d-1}$
qubits of $\ket{\alpha^{(d-1,2r+1)}}$ don't have a counterpart to sum with. In
other words, we can ignore the addition involving leading/trailing zeros. We
note that the most significant $2^{d-1}$ qubits of $\ket{\alpha^{(d-1,2r+1)}}$
can be affected by a carry bit coming from the addition of the overlap.

Using this observation, we divide the modified QCLA into three stages:
\texttt{suffix\_copying}, \texttt{overlapping\_sum}, and \texttt{carry\_propagation}.

\begin{enumerate}

\item \texttt{suffix\_copying:}
The least significant $2^{d-1}$ qubits of
$\ket{\alpha^{(d-1,2r)}}$
are copied directly to the least significant $2^{d-1}$ qubits of the output register.

\item \texttt{overlapping\_sum:}
Following Draper~\cite{Draper2004},
we perform the standard QCLA 
on the overlapping region consisting of 
$n + 2^d - 2\cdot 2^{d-1} = n$ qubits,
saving the carry-out bit from this addition
in a 1-qubit register,  $\ket{c^{(d,r)}}$.


\item \texttt{carry\_propagation:}
The carry-out qubit is added with the most significant $2^{d-1}$ qubits of
$\ket{\alpha^{(d-1,2r+1)}}$ using a modified QCLA where all qubits other than
the carry are logically fixed to $\ket{0}$. Although this formally corresponds
to adding a zero-padded register, it turns out that the leading zeros do not
need to be explicitly allocated. Indeed, the first step of the QCLA is to perform a CNOT
gate from each of the most significant $2^{d-1}$ qubits of
$\ket{\alpha^{(d-1,2r+1)}}$ onto $\ket{0}^{\otimes 2^{d-1} -1}\ket{c^{(d,r)}}$. So the leading
zeros of the second register are now logically equivalent to the corresponding bits of the first register \cite{Draper2004}.

\end{enumerate}


For 4-bit multiplication, Figure~\ref{fig:exp_mult} gives a visualization of the overlapping regions, and Figure~\ref{fig:adder_tree} shows a high-level circuit diagram of the \texttt{parallel\_adder\_tree}.

\begin{figure*}
\begin{widetext}
\begin{center}
\begin{tikzpicture}
  \matrix (m) [
    matrix of nodes,
    nodes={anchor=center, minimum width=1.2cm},
    column sep=0pt,
    row sep=2pt
  ] {
  & & & &          &  $x_{3}$ & $x_{2}$ &  $x_1$ & $x_0$ \\
  & & & & $\times$ &  $y_{3}$ & $y_{2}$ &  $y_1$ & $y_0$ \\ \hline
  & & & &                        & $\alpha^{(0,1)}_{3}$ & $\alpha^{(0,0)}_{2}$  & $\alpha^{(0,0)}_1$ & $\alpha^{(0,0)}_0$ \\
  & & & & $\alpha^{(0,1)}_{3}$ & $\alpha^{(0,1)}_{2}$ & $\alpha^{(0,1)}_1$  & $\alpha^{(0,1)}_0$ & \\
  & \\
  &                        & & $\alpha^{(0,2)}_{3}$ & $\alpha^{(0,2)}_2$ & $\alpha^{(0,2)}_1$ & $\alpha^{(0,2)}_{0}$  \\
    & +                       & $\alpha^{(0,3)}_{3}$ & $\alpha^{(0,3)}_{2}$ & $\alpha^{(0,3)}_1$ & $\alpha^{(0,3)}_0$   \\
    \hline  & & &     $\alpha^{(1,0)}_{5}$ & $\alpha^{(1,0)}_4$ & $\alpha^{(1,0)}_{3}$ & $\alpha^{(1,0)}_{2}$  & $\alpha^{(1,0)}_1$ & $\alpha^{(1,0)}_0$\\
    + &      $\alpha^{(1,1)}_{5}$ & $\alpha^{(1,1)}_4$ & $\alpha^{(1,1)}_{3}$ & $\alpha^{(1,1)}_{2}$  & $\alpha^{(1,1)}_1$ & $\alpha^{(1,1)}_0$\\
     \hline   & $\alpha^{(2,0)}_7$ & $\alpha^{(2,0)}_{6}$ &  $\alpha^{(2,0)}_{5}$ & $\alpha^{(2,0)}_4$ & $\alpha^{(2,0)}_{3}$ & $\alpha^{(2,0)}_{2}$  & $\alpha^{(2,0)}_1$ & $\alpha^{(2,0)}_0$\\
  };
\end{tikzpicture}
\end{center}
\end{widetext}
\caption{Explicit multiplication of $4$-bit integers. According to Claim 2, each partial sum has at most $n+2^d$ bits, with a possibility of leading zeros.}
\label{fig:exp_mult}
\end{figure*}

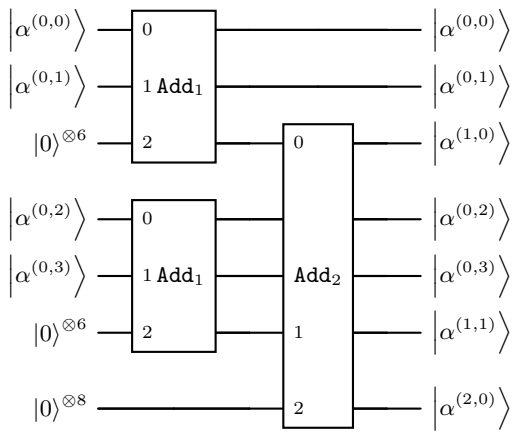
\begin{figure}[h]
    \centering
    \begin{quantikz}[column sep=0.45cm]
        \lstick{$\ket{\alpha^{(0,0)}}$} &  \gate[3]{\texttt{\;\;Add}_1}\gateinput{0} & \qw                     & \qw &\qw &\rstick{$\ket{\alpha^{(0,0)}}$}\\
        \lstick{$\ket{\alpha^{(0,1)}}$} &\gateinput{1}      & \qw   & \qw &\qw &\rstick{$\ket{\alpha^{(0,1)}}$}\\
        \lstick{$\ket{0}^{\otimes 6}$} &\gateinput{2}     & \qw   & \gate[5]{\texttt{Add}_2}\gateinput{0} &\qw &\rstick{$\ket{\alpha^{(1,0)}}$}\\
        \lstick{$\ket{\alpha^{(0,2)}}$} &  \gate[3]{\texttt{\;\;Add}_1}\gateinput{0}  & \qw & \qw &\qw &\rstick{$\ket{\alpha^{(0,2)}}$}\\
        \lstick{$\ket{\alpha^{(0,3)}}$} & \gateinput{1}  & \qw      & \qw &\qw &\rstick{$\ket{\alpha^{(0,3)}}$}\\
        \lstick{$\ket{0}^{\otimes 6}$} &\gateinput{2}      & \qw   & \gateinput{1} &\qw &\rstick{$\ket{\alpha^{(1,1)}}$}\\
        \lstick{$\ket{0}^{\otimes 8}$} &\qw      & \qw   & \gateinput{2} &\qw &\rstick{$\ket{\alpha^{(2,0)}}$}
    \end{quantikz}
    \caption{$4$-bit integer Parallel Adder Tree Example.}
    \label{fig:adder_tree}
\end{figure}

\subsubsection{Efficient Parallel Adder Tree Uncomputation}\label{subsec:efficient_uncompute}

As mentioned in the introduction, this work uses clean ancillas, and so each submodule needs to be uncomputed. This subsection provides an efficient approach to uncompute ancillas corresponding to intermediate partial sums produced by the \texttt{parallel\_adder\_tree}.

As shown in Fig.~\ref{fig:adder}, each adder leaves its input registers unchanged. Furthermore, each adder at level $d$ requires only the partial sums produced at level $d-1$ as inputs. Consequently, all partial sums produced at levels $< d-1$ are no longer required by the \texttt{parallel\_adder\_tree}.

Using this fact, our uncomputation procedure proceeds as follows. Beginning at $d = 3$, the adders at level $d-2$ are uncomputed concurrently with the computation of the partial sums at level $d$. Because the input registers of every adder remain unchanged throughout the computation, the outputs produced during the forward pass can be reset to $\ket{0}$ by applying the inverse of the corresponding QCLA adder. This process continues until all partial sums have been uncomputed, except for the final adder whose output register stores the result.

Uncomputing a QCLA is equivalent to reversing the circuit and therefore uses no more gates, ancillary qubits, or circuit depth than the original adder. As a result, the asymptotic resource costs of the \texttt{parallel\_adder\_tree} are preserved.

This uncomputation procedure doubles the total gate count of the adder tree. However, the circuit depth and Toffoli depth increase by only one layer. Indeed, once the computation process begins at $d=3$, each level simultaneously computes level $d$ and uncomputes level $d-2$, except for the final level, which is not uncomputed. The ancillary qubit cost is more involved and will be discussed in the next subsection.

\subsubsection{Qubit-Efficient Parallel Adder Tree}\label{subsec:efficient_adder}

For $n$-qubit addition, the original QCLA uses at most $n$ ancillary qubits for computation and is designed such that these qubits are reset to $\ket{0}$ at the end of the adder \cite{Draper2004}. We exploit this property to reuse ancillas across layers of the tree. As will be discussed in the next section, the \texttt{fast\_copy} procedure is uncomputed before the start of the \texttt{parallel\_adder\_tree}, freeing up $2n^2$ ancillary qubits. In this subsection, we will show that these $2n^2$ qubits are sufficient for this submodule, so no additional ancillary qubits are required.

We analyze the qubit usage level-by-level, accounting for both the computational ancillas used by the adders and the registers storing the partial sums.

\paragraph{Forward computation.}
We first consider only the production of partial sums. It is sufficient to analyze the \texttt{overlapping\_sum} stage of the adder, since this stage will use and uncompute $n$ computational ancilla qubits. Throughout the \texttt{parallel\_adder\_tree}, the \texttt{carry\_propagation} acts on a prefix of size at most $2^{\log{n}-1} = \frac{n}{2}$. In the worst case, which occurs at level $\log{n}$, this stage requires $\frac{n}{2}$ qubits for computation and another $\frac{n}{2}$ qubits to store the output. Therefore, the $n$ ancillas reused from the \texttt{overlapping\_sum} stage are sufficient for this stage as well.

Each adder at level $d$ outputs a partial sum stored in a register of size $n+2^d$ qubits. By the property of a binary tree, there are $\lceil n/2^d\rceil$ adders at level $d$, and so the total number of qubits required to store the outputs at this level is at most
\begin{align}
\frac{n^2}{2^d} + n.
\end{align}
Each adder in the \texttt{overlapping\_sum} stage takes $n$ qubits of inputs and requires $n$ computational ancillary qubits. So, the number of computational ancillas required at level $d$ is at most
\begin{align}
\frac{n^2}{2^d}.
\end{align}
These ancillas are released at the end of the level and reused in level $d+1$.

At level $d=1$, no qubits are available for reuse, so the total number of ancillas required is
\begin{align}
n^2 + n.
\end{align}
For each level $d\ge 2$, the ancillas from the previous level are reused, so the additional number of qubits required is
\begin{align}
\frac{n^2}{2^{d}} + n +\frac{n^2}{2^{d}}  - \frac{n^2}{2^{d-1}} = n.
\end{align}
Thus, for any $d \in [1,\lceil \log n\rceil]$, the total number of ancillas required to produce partial sums up to level $d$ is
\begin{align}
    n^2 + dn
    \label{eq:sum_qubits_rewrite}
\end{align}

\paragraph{Uncomputation.}
We now incorporate the uncomputation of the ancillas containing the intermediate partial sums. 
Uncomputing an adder is equivalent to running it in reverse. Since the input registers of each adder remain unchanged throughout the computation, the outputs produced during the forward pass can be reset to $\ket{0}$ by applying the inverse of the corresponding QCLA circuit. This releases both the computational ancillas and the output register associated with that adder.

At level $d \ge 3$, the circuit simultaneously (i) computes the partial sums at level $d$ and (ii) uncomputes the adders from level $d-2$. The forward computation at level $d$ requires at most
\begin{align}
\frac{n^2}{2^{d-1}} + n 
\end{align}
ancillary qubits (output registers plus computational ancillas), while the uncomputation of level $d-2$ requires at most
\begin{align}   
\frac{n^2}{2^{d-2}}
\end{align}
computational ancillas.

On the other hand, uncomputing level $d-2$ frees
\begin{align}
\frac{n^2}{2^{d-3}} + n 
\end{align}
qubits (both the stored partial sums and the computational ancillas of that level). The net change in the number of qubits in use when moving from level $d-1$ to level $d$ is
\begin{align}
&n
- \left(\frac{n^2}{2^{d-3}} + n + \frac{n}{2^{d-2}}\right)  < 0.
\end{align}
Hence, for every $d > 3$, the total number of qubits in use strictly decreases.

Consequently, the maximum number of ancillas in use occurs among the first three levels. Using \eqref{eq:sum_qubits_rewrite}, we obtain
\begin{align}
    d=1&: n^2+n\\
    d=2&: n^2+2n,
\end{align}
and for level $d=3$, we also uncompute level $1$, which costs $\frac{n^2}{2}$ ancillas. Hence, the total number of ancillas in use at level $3$ is
\begin{align}
    n^2 + 3n + \frac{n^2}{2}
    &= \frac{3}{2}n^2 + 3n \\
    &\le 2n^2,
\end{align}
where the final inequality holds for all $n\ge 6$ (and smaller $n$ can be handled by adding a constant number of ancillas).

Since the qubit usage decreases for all subsequent levels, the total number of ancillary qubits required by the \texttt{parallel\_adder\_tree}, including uncomputation, is strictly less than $2n^2$.

\subsubsection{Depth and Size Analysis}\label{subsec:gate_opt_adder}

In this section, we analyze the depth and gate cost of the \texttt{parallel\_adder\_tree}. We consider the contribution of each stage of the modified adder.

Consider the adder at depth $d$, with inputs $\ket{\alpha^{(d-1,2r)}}, \ket{\alpha^{(d-1,2r+1)}}$. As mentioned earlier, the \texttt{suffix\_copying} stage can be implemented by reusing the least significant $2^{d-1}$ qubits of $\ket{\alpha^{(d-1,2r)}}$, and so it does not incur any depth or size cost. 

The \texttt{overlapping\_sum} stage is an $n$-qubit QCLA instance, which is known to have a depth of at most $2\log{n}+7$, Toffoli depth at most $2\log{n}+4$, and uses at most $5n-3\log{n}$ Toffoli gates, $8n-3\log{n}$ gates \cite{Draper2004}. There are $2n$ adders in total (including uncomputation). Thus, the Toffoli count is 
\begin{align}10n^2-6n\log{n},\end{align}
and the total gate count is 
\begin{align}16n^2-6n\log{n}.\end{align}
The total depth contribution of this stage is equivalent to having $\lceil\log{n}\rceil+1$ layers, as discussed in Sec.~\ref{subsec:efficient_uncompute}. So, the depth is 
\begin{align}
    2\log^2{n}+9\log{n}+7,
\end{align}
and the total Toffoli depth is
\begin{align}
    2\log^2{n}+6\log{n}+4.
\end{align}

The \texttt{carry\_propagation} stage also uses the QCLA, but its size depends on the depth of the adder. At depth-level $d$, the input size to the adder is $2^{d-1}$, and so this adder has circuit depth of at most $2d+5$, Toffoli depth at most $2d+2$, and uses at most $5\cdot2^{d-1}-3(d-1)$ Toffoli gates and $8\cdot2^{d-1}-3(d-1)$ gates.

Following the discussion on circuit depth in Sec.~\ref{subsec:efficient_uncompute}, the total Toffoli depth of this stage (including uncompuation) is calculated as:
\begin{align}
    4+6+\sum_{d=1}^{\log{n}-1}(2d+2) &= \log^2{n}+\log{n} +8
\end{align}
Similarly, the depth of this circuit is:
\begin{align}
 7+9+\sum_{d=1}^{\log{n}-1}(2d+5)
    = \log^2{n} + 4\log{n}+11
\end{align}

At each depth $d$, there are $2\cdot \frac{n}{2^{d}} = \frac{n}{2^{d-1}}$ instances of the \texttt{carry\_propagation} stage (factor of 2 because of uncomputation), and each is a QCLA with input size $2^{d-1}$ qubits. Thus, the Toffoli count is at most:
\begin{align}
\sum_{d=1}^{\log{n}}(\frac{n}{2^{d-1}}(5\cdot2^{d-1}-3(d-1)))
    \leq\sum_{d=1}^{\log{n}}(5n) = 5n\log{n}
\end{align}
Similarly, the total gate count is at most $8n\log{n}$.

Putting the three stages together, the 
total Toffoli depth is $3\log^2{n}+7\log{n}+12$, 
total depth is $3\log^2{n}+13\log{n}+18$, and the 
Toffoli count and total gate count are at most $10n^2-n\log{n}$ and $16n^2+2n\log{n}$, respectively.

\section{Fast Quantum Multiplication Algorithm}

In this section, we combine the submodules introduced in Section~\ref{sec:prelim} to construct the Fast Quantum Multiplication Algorithm. We also provide a concrete analysis of its resource costs.

\begin{algorithm}[H]
\caption{fast\_quantum\_multiplication\_algorithm}
\label{algo:multiplication}

\KwIn{
$\ket{x}\ket{y}\ket{0}^{\otimes 2n}\ket{0}^{\otimes 3n^2-2n}$
}
\vspace{0.5em}
\texttt{fast\_copy($\ket{x},\ket{y}$)}\;
\vspace{0.5em}
\texttt{partial\_products($\ket{y^{(i)}}, \ket{x^{(i)}},\ket{0^{(i)}}$)}\;

\vspace{0.5em}
\texttt{uncompute fast\_copy$(\ket{x}, \ket{y})$}\;
\vspace{0.5em}

\texttt{parallel\_adder\_tree$(\ket{\alpha^{(0,0)}}\dots\ket{\alpha^{(0,n-1)}})$}\;
\vspace{0.5em}
\texttt{redo fast\_copy$(\ket{x},\ket{y})$}\;
\vspace{0.5em}
\texttt{uncompute partial\_products($\ket{y^{(i)}}, \ket{x^{(i)}},\ket{0^{(i)}}$)}\;
\vspace{0.5em}
\texttt{uncompute fast\_copy$(\ket{x}, \ket{y})$}\;
\vspace{0.5em}

\KwOut{$\ket{x}\ket{y}\ket{xy}\ket{0}^{\otimes 3n^2-2n}$}
\end{algorithm}

\begin{claim}
    Given inputs $\ket{x}\ket{y}\ket{0}^{\otimes 2n}\ket{0}^{\otimes 3n^2-2n}$ for $n$-bit integers $x$ and $y$, Algorithm~\ref{algo:multiplication} outputs  $\ket{y}\ket{x}\ket{xy}\ket{0}^{\otimes 3n^2-2n}$ in $O(\log^2n)$ time. 
\end{claim}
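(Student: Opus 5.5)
The plan is to track the joint computational-basis state through the seven steps of Algorithm~\ref{algo:multiplication}, prove correctness and ancilla cleanliness by induction over the steps, and then add up the depth of each step. Linearity then extends everything from basis inputs $\ket{x}\ket{y}$ to superpositions. The circuit only uses CNOT and Toffoli gates together with the (Clifford+$T$) QCLA, so it acts as a classical reversible permutation on basis states. It therefore suffices to follow the bit strings.

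\textbf{Correctness.}
\begin{enumerate}
\item By Claim~1, \texttt{fast\_copy} turns $\ket{x}\ket{y}$ plus zeroed registers into $n$ copies $\ket{x^{(i)}}$ and $n$ copies of $y$. The $y$ copies are regrouped as $\ket{y^{(i)}}=\ket{y_i}^{\otimes n}$; this uses $2n^2-2n$ fresh ancillas.
\item \texttt{partial\_products} applies disjoint Toffolis, writing $\ket{y_i x}=\ket{\alpha^{(0,i)}}$ into $n$ fresh $n$-qubit registers.
\item The inverse of \texttt{fast\_copy} resets the copies to $\ket{0}$. This is legitimate because the Toffolis in step~2 left the $x^{(i)}$ and $y^{(i)}$ registers unchanged, so the state is exactly the output of step~1 tensored with the partial products.
\item I would argue \texttt{parallel\_adder\_tree} by induction on $d$, using Eq.~\eqref{eq9}: each $\texttt{Add}_d$ maps its inputs to themselves and writes $\alpha^{(d-1,2r)}+2^{2^{d-1}}\alpha^{(d-1,2r+1)}$. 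The three stages (\texttt{suffix\_copying}, \texttt{overlapping\_sum}, \texttt{carry\_propagation}) realize this exactly, since the low bits need no addition and the carry-out of the overlap is propagated into the high bits. The interleaved uncomputation of level $d-2$ applies the inverse QCLA to unchanged inputs, so those registers return to $\ket{0}$. Together with Eq.~\eqref{eq:DefnAlpha}, the root equals $\alpha^{(\lceil\log n\rceil,0)}=xy$, which I place in the $2n$-qubit output register.
\item Steps 5--7 are the mirror image of steps 1--3: recopy, uncompute the Toffolis (self-inverse, and the controls are again present), then uncopy. These steps return all $\alpha^{(0,i)}$ and all copies to $\ket{0}$ while leaving $\ket{x}\ket{y}\ket{xy}$ intact.
\end{enumerate}

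\textbf{Ancilla budget.} I will check that the peak usage fits in $3n^2-2n$. During steps 1--3 the usage is $2n^2-2n$ for copies plus $n^2$ for partial products. During the tree, the partial products occupy $n^2$ and, by Sec.~\ref{subsec:efficient_adder}, the tree needs at most $2n^2$ reused qubits, of which $2n$ are the output register. The small-$n$ caveat from that section would be noted.

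\textbf{Depth.} Summing the depth of each step:
\begin{itemize}
\item each of the four \texttt{fast\_copy} passes costs $\lceil\log n\rceil+O(1)$ by Claim~1;
\item each \texttt{partial\_products} pass costs $1$;
\item the tree costs $3\log^2 n+13\log n+18$ by Sec.~\ref{subsec:gate_opt_adder}.
\end{itemize}
The total is $3\log^2 n+O(\log n)=O(\log^2 n)$.

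\textbf{Main obstacle.} The delicate step is the adder tree when $n$ is not a power of two. There the unpaired node at a level is passed up unchanged, and register widths $n+2^d$ can exceed the actual bit length. I would handle this by padding $y$ with zero bits up to $2^{\lceil\log n\rceil}$. This changes no value and at most doubles $n$ inside the logarithms, so the asymptotic depth is unaffected.

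Finally, the statement lists the output as $\ket{y}\ket{x}$ rather than $\ket{x}\ket{y}$. I read this as the same unchanged input registers, possibly up to relabeling.
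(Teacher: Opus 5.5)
Your route is the paper's: follow the basis state through the seven steps and let the adder tree dominate the depth. But in step~4 you assert that the interleaved uncomputation of level $d-2$ ``applies the inverse QCLA to unchanged inputs,'' and this step fails.

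The operands of $\texttt{Add}_{d-2}$ are the level-$(d-3)$ registers. Under the schedule of Sec.~\ref{subsec:efficient_uncompute}, those registers were reset one step earlier: level $k$ is cleared at step $k+2$, but its operands are cleared at step $k+1$. The inverse adder fixes the all-zero state (operands, target and work qubits all $\ket{0}$). Reversibility therefore forbids it from also sending zero operands together with a target $\alpha^{(k,r)}\neq 0$ to the all-zero state. So every level $k\ge 2$ stays dirty; the first casualty is $\alpha^{(2,\cdot)}$, which exists as soon as $\lceil\log n\rceil\ge 3$. Steps~5--7 then break as well, since they reuse those qubits as if they were fresh.

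The remark that levels below $d-1$ are ``no longer required'' is true only for the forward pass. Each of those levels is still needed to erase the level directly above it. The paper's Step~4 merely cites that section and inherits the same defect, so citing it does not close the gap.

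The repair is to uncompute top-down. Keep every intermediate level until the root $\alpha^{(\lceil\log n\rceil,0)}$ is formed. Then apply $\texttt{Add}_{\lceil\log n\rceil-1}^{-1},\dots,\texttt{Add}_{1}^{-1}$ in that order, each level in parallel over $r$, so that every inverse sees intact operands. This adds about $\log n$ adder layers of depth $O(\log n)$ each. So the $O(\log^2 n)$ bound survives, but the figure $3\log^2 n+13\log n+18$ you quote for the tree does not (its leading constant roughly doubles).

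You must also redo the ancilla count for this schedule rather than citing Sec.~\ref{subsec:efficient_adder}. Nothing is freed in the forward pass, so by Eq.~\eqref{eq:sum_qubits_rewrite} the tree peaks at $n^2+O(n\log n)$, and the top-down pass only lowers usage. Together with the $n^2$ partial products this fits in the $3n^2$ available qubits for all but small $n$. Afterwards the $2n^2-2n$ qubits needed for the recopy in step~5 are free again.

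Finally, pad $y$ only virtually. Adding a zero register is the identity, which is exactly the paper's pass-through of unpaired nodes. Materializing the extra zero partial products and their adders would invalidate the budget you checked.
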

\begin{proof}
    We start with initial state $\ket{x}\ket{y}\ket{0}^{\otimes 3n^2}$. \\
    
    \paragraph{Step 1: Fast Copy of $\ket{x}$ and $\ket{y}$}
    We apply Algorithm~\ref{algo:fastcopy}, \texttt{fast\_copy}, to produce $n$ copies of $\ket{x}$ and $\ket{y}$ in $O(\log n)$ time. In total, we create $n$ copies of $\ket{x}$ and $n$ copies of $\ket{y}$, resulting in the state
    \begin{align}
    \ket{x}\ket{y}\ket{y^{(0)}}\dots\ket{y^{(n-1)}}\ket{x^{(0)}}\dots\ket{x^{(n-1)}}\ket{0}^{\otimes n^2} .
    \end{align}\\   

    \paragraph{Step 2: Compute all Partial Products}
    Given $n$ copies of $\ket{x}$ and $\ket{y}$, Algorithm~\ref{algo:itwopowermultiples} performs the transformation:
    \begin{align}
        \ket{x}\ket{y}\bigotimes_{i=0}^{n-1}\smallket{y^{(i)}}\smallket{x^{(i)}}\ket{0}
        \;\longmapsto\;
        \ket{x}\ket{y}\bigotimes_{i=0}^{n-1}\smallket{y^{(i)}}\smallket{x^{(i)}}\smallket{\alpha^{(0,i)}}    
    \end{align}
    in $O(1)$ time. Each of the $n$ partial products takes $n$ qubits to store, so we require $n^2$ ancillary qubits for this step.\\
    
    \paragraph{Step 3: Uncompute Step 1}
    This step simply frees up ancilla qubits to be used for the next step. Since we didn't create any entanglement and only used CNOTs, we are able to uncompute Algorithm~\ref{algo:fastcopy} using Algorithm~\ref{algo:fastcopy} in $O(\log{n})$ time. Our state is now:
    \begin{align}
        \ket{x}\ket{y}\big(\bigotimes_{i=0}^{n-1}\smallket{\alpha^{(0,i)}}\big)\ket{0}^{\otimes 2n^2} 
    \end{align}\\

    \paragraph{Step 4: Sum of Partial Products}

    We now perform the \texttt{parallel\_adder\_tree} on all the partial products computed from the previous step, namely those of the form $\ket{\alpha^{(0,i)}} \equiv \ket{y^{(i)}x^{(i)}}$. The \texttt{parallel\_adder\_tree} algorithm from Subsection~\ref{subsec:add} maps the sum of all inputs into one register. After performing this module, we have a register containing the state:
    \begin{align}
        \ket{\alpha^{(\lceil\log{n}\rceil,0)}} = \ket{y_02^0x+y_12^1x+\dots+y_{n-1}2^{n-1}x} = \ket{xy}.
    \end{align}

    This submodule also resets all intermediate partial products to $\ket{0}$. As discussed in Section~\ref{subsec:add}, this entire submodule requires $O(\log^2{n})$ depth and $< 2n^2$ ancilla qubits, which are reused from the previous step. Following this step, we have the state:
    \begin{align}
        \ket{x}\ket{y}\big(\bigotimes_{i=0}^{n-1}\smallket{\alpha^{(0,i)}}\big)\ket{xy}\ket{0}^{\otimes 2n^2-2n} 
    \end{align}\\

    \paragraph{Step 5: Restore Copies of $\ket{x}$ and $\ket{y}$}
    To efficiently uncompute Step 2, we recreate $n$ copies of $\ket{x}$ and $\ket{y}$ by reapplying \texttt{fast\_copy}. The required $2n^2-2n$ ancillas are reused from Step 4, and the operation takes $O(\log n)$ time. Our state is now:
    \begin{align}
        \ket{x}\ket{y}\smallket{x^{(1)}}\dots\smallket{x^{(n-1)}}\smallket{y^{(1)}}\dots\smallket{y^{(n-1)}}\big(\bigotimes_{i=0}^{n-1}\smallket{\alpha^{(0,i)}}\big)\ket{xy}
    \end{align}\\

    \paragraph{Step 6: Uncompute Partial Products}
    We now apply the inverse of Algorithm~\ref{algo:itwopowermultiples}. Since the registers $\ket{y^{(i)}}$ and $\ket{x^{(i)}}$ have been restored in Step 5, we can uncompute all partial products $\ket{\alpha^{(0,i)}}$ for $i\in[0,n-1]$ in parallel, yielding the state:
    \begin{align}
        \ket{x}\ket{y}\smallket{x^{(1)}}\dots\smallket{x^{(n-1)}}\smallket{y^{(1)}}\dots\smallket{y^{(n-1)}}\ket{xy}\ket{0}^{\otimes n^2}
    \end{align}\\

    \paragraph{Step 7: Uncompute the Copies}
    Finally, we uncompute the copied registers using the inverse of \texttt{fast\_copy}, in $O(\log{n})$ depth, restoring all ancillary registers to $\ket{0}$. The final state is therefore
    \begin{align}
        \ket{y}\ket{x}\ket{xy}\ket{0}^{\otimes 3n^2-2n}.
    \end{align}

    It is apparent that the highest-depth step in this algorithm is Step 4, the \texttt{parallel\_adder\_tree}, which has a depth of $O(\log^2n)$.
    \end{proof}
    
    \begin{table*}[t]
    \centering
        \caption{Concrete upper-bounded resource costs for each component of the Fast Quantum Multiplication Algorithm.}
    \resizebox{\textwidth}{!}{
    \renewcommand{\arraystretch}{1.2}

    \begin{tabular}{|c|c|c|c|c|c|}
    \hline
        Algorithm Step & Depth & Toffoli depth & Toffoli Count & Gate Count & Ancillary Qubit Count\\\hline
        \texttt{fast\_copy} & $\lceil \log{n} \rceil$ & 0 & 0 & $2n^2$ & $2n^2$\\\hline
         \texttt{partial\_products} &$1$ & $1$ & $n^2$ & $n^2$ & $n^2$\\\hline
         \texttt{Undo fast\_copy} & $\lceil \log{n} \rceil$ & 0 & 0 & $2n^2$ & $-2n^2$\\\hline
         \texttt{parallel\_adder\_tree} & $3\log^2{n} +13\log{n}+18$ & $3\log^2{n} +7\log{n}+12$  & $10n^2-n\log{n}$  & $16n^2+2n\log{n}$ & $2n^2$\\\hline
        \texttt{Redo fast\_copy} & $\lceil \log{n} \rceil$ & 0 & 0 & $2n^2$ & $0$\\\hline
         \texttt{Undo partial\_products} &$1$ & $1$ & $n^2$ & $n^2$ & $0$\\\hline
         \texttt{Undo fast\_copy} & $\lceil \log{n} \rceil$ & 0 & 0 & $2n^2$ & $0$\\\hline

         Total & $3\log^2n + 17\log{n} + 20$ & $3\log^2n+7\log{n}+14$ & $12n^2-n\log{n}$ & $26n^2+2n\log{n}$ & $3n^2$\\\hline
    \end{tabular}
    }
    \label{table:results}
\end{table*}

\subsection{Concrete Resource Costs}

In this subsection, we analyze the resource costs of each component of Algorithm~\ref{algo:multiplication}. A summary of these costs is presented in Table~\ref{table:results}.

\begin{itemize}
\item \textbf{Step 1, \texttt{fast\_copy}: }We require a CNOT from every source qubit to every target qubit. In total, this procedure produces $n$ copies of the $n$-qubit registers $\ket{x}$ and $\ket{y}$, so we need a total of $2n^2$ CNOT gates. To store the copied registers, we need $2n^2$ ancilla qubits. As mentioned in Subsection~\ref{subsec:fast_copy}, and using the fact that the copying of $\ket{x}$ and $\ket{y}$ are indepenent, this procedure has a depth of $\log{n}$.

\item \textbf{Step 2, \texttt{partial\_products}: }Each partial product requires $n$ qubits to store, so we need a total of $n^2$ ancillary qubits. Furthermore, each call uses $n$ Toffoli gates, so \texttt{partial\_products} has a $T$-count of $n^2$, but all calls can be performed in parallel, so the Toffoli depth is $1$.

\item \textbf{Step 3, \texttt{uncompute fast\_copy}: }Similar to line 1, this procedure has depth of $\log{n}$ and requires $2n^2$ CNOT gates. However, we free up $2n^2$ qubits for later use.

\item \textbf{Step 4, \texttt{parallel\_adder\_tree}: }
This function is implemented using Draper’s out-of-place QCLA \cite{Draper2004}. From Subsec.~\ref{subsec:efficient_adder}, the adder tree requires at most $2n^2$ ancillary qubits, which can be entirely reused from the $2n^2$ qubits freed in the previous step.

From Subsec.~\ref{subsec:gate_opt_adder}, 
the depth is 
$3\log^2{n}+13\log{n}+18$, Toffoli depth is $3\log^2{n}+7\log{n}+12$, 
Toffoli count is at most 
$10n^2 -n\log n$, and the total gate count is at most 
$16n^2 +2n\log n$.

\item \textbf{Step 5, \texttt{redo fast\_copy}: }As in Step 1, this operation requires $2n^2$ CNOT gates and has depth $\lceil \log n \rceil$. The $2n^2$ ancilla qubits required can be reused from the previous step.

\item \textbf{Step 6, \texttt{uncompute partial\_products}: }No additional ancillary qubits are required, since this step reverses Step 2. The operation requires $n^2$ Toffoli gates and has Toffoli depth $1$.

\item \textbf{Step 7, \texttt{uncompute fast\_copy}: } Equivalent to Step 3, this procedure has depth $\lceil \log n \rceil$ and requires $2n^2$ CNOT gates.
\end{itemize}

\section{Conclusion}

Optimizing quantum algorithms for multiplication is important for its key role in many applications, such as modular exponentiation in Shor's algorithm, Hamiltonian simulation, the HHL algorithm, and Taylor series approximations \cite{Shor1997, pavlidis2013fastquantummodularexponentiation, Harrow_2009, Childs_2017, Berry_2015}. As such, improvements in the depth, gate count, and ancillary qubit usage of quantum multiplication directly lead to advancements in higher-level quantum algorithms.

In this work, we demonstrated how parallelization techniques can be leveraged to reduce the overall circuit depth and $T$-depth to $O(\log^2 n)$ while maintaining reasonable qubit overhead and gate complexity. Our construction is simple and modular, enabling straightforward implementation, analysis, and integration into larger quantum computing applications.

These optimizations are particularly significant in the fault-tolerant setting, where $T$-depth is an especially important cost function. Future work may explore further reductions in $T$-depth through improved Quantum Carry Lookahead Adder constructions, alternative parallel addition schemes, or more advanced multiplication frameworks.

\begin{acknowledgments}
The authors thank Matthew Amy, Vlad Gheorghiu, and Andrew Jena for their feedback on this manuscript.
\end{acknowledgments}
\bibliography{bib}

\end{document}